
\documentclass{amsart}
\usepackage{hyperref}
\usepackage{amsmath,amsthm,amscd,amssymb}
\usepackage{enumerate}

\newcommand{\arxiv}[1]{\href{http://arxiv.org/#1}{arXiv:#1}}
\newcommand*{\mailto}[1]{\href{mailto:#1}{\nolinkurl{#1}}}


\newtheorem{theorem}{Theorem}[section]
\newtheorem{lemma}[theorem]{Lemma}
\newtheorem{corollary}[theorem]{Corollary}

\numberwithin{equation}{section}

\unitlength1cm
\newcommand{\C}{\mathbb{C}}
\newcommand{\R}{\mathbb{R}}

\newcommand{\E}{\mathrm{e}}
\newcommand{\I}{\mathrm{i}}

\newcommand{\siul}{\sigma^{\mathrm{u,l}}}
\newcommand{\siu}{\sigma^{\mathrm{u}}}
\newcommand{\sil}{\sigma^{\mathrm{l}}}
\newcommand{\lau}{\lambda^{\mathrm{u}}}
\newcommand{\lal}{\lambda^{\mathrm{l}}}

\newcommand{\Res}{\mathop{\rm Res}}

\renewcommand{\Im}{\mathop{\rm Im}}

\newcommand{\beq}{\begin{equation}}
\newcommand{\eeq}{\end{equation}}
\newcommand{\bal}{\begin{align}}
\newcommand{\eal}{\end{align}}
\newcommand{\nn}{\nonumber}
\newcommand{\al}{\alpha}

\newcommand{\om}{\omega}
\newcommand{\si}{\sigma}
\newcommand{\pa}{\partial}

\newcommand{\la}{\lambda}
\newcommand{\ov}{\overline}

\numberwithin{equation}{section}

\begin{document}

\title[{A Paley--Wiener Theorem for Periodic Scattering}]{A Paley--Wiener Theorem
for Periodic Scattering with Applications to the Korteweg--de Vries Equation}

\author[I. Egorova]{Iryna Egorova}
\address{B.Verkin Institute for Low Temperature Physics\\
47 Lenin Avenue\\61103 Kharkiv\\Ukraine}
\email{\mailto{iraegorova@gmail.com}}

\author[G. Teschl]{Gerald Teschl}
\address{Faculty of Mathematics\\
Nordbergstrasse 15\\ 1090 Wien\\ Austria\\ and\\ International Erwin Schr\"odinger
Institute for Mathematical Physics\\ Boltzmanngasse 9\\ 1090 Wien\\ Austria}
\email{\mailto{Gerald.Teschl@univie.ac.at}}
\urladdr{\url{http://www.mat.univie.ac.at/~gerald/}}

\thanks{Research supported by the Austrian Science Fund (FWF) under Grant No.\ Y330.}
\thanks{Zh. Mat. Fiz. Anal. Geom. {\bf 6:1}, 21--33 (2010)}

\keywords{Inverse scattering, finite-gap background, KdV, nonlinear Paley--Wiener Theorem}
\subjclass[2000]{Primary 34L25, 35Q53; Secondary 35B60, 37K20}

\begin{abstract}
Consider a one-dimensional Schr\"odinger operator which is a short-range perturbation
of a quasi-periodic, finite-gap operator. We give necessary and sufficient conditions on the left, right reflection
coefficient such that the difference of the potentials has finite support to the left, right, respectively.
Moreover, we apply these results to show a unique continuation type result for solutions
of the  Korteweg--de Vries equation in this context. By virtue of the Miura transform an
analogous result for the modified Korteweg--de Vries equation is also obtained.
\end{abstract}

\maketitle

\section{Introduction}

Since the seminal work of Gardner et al.\ \cite{GGKM} in 1967 the
inverse scattering transform is one of the main tools for solving
the Korteweg--de Vries (KdV) equation \beq \label{KdV}q_t(x,t) =
-q_{xxx}(x,t) + 6 q(x,t) q_x(x,t). \eeq Since it very much
resemblances the use of the classical Fourier transform method to
solve linear partial differential equations, the inverse scattering
transform is also known as the nonlinear Fourier transform.
Moreover, the linear and nonlinear Fourier transform share many
other properties one of which, namely the Paley--Wiener theorem,
will be the main subject of this paper.

Let $L_q= -\frac{d^2}{dx^2}+q(x)$ be the one-dimensional
Schr\"odinger operator. Assume that $q(x)$ decays sufficiently fast
such that one can associate left/right reflection coefficients
$R_\pm(\la)$ with it. In their seminal paper Deift and Trubowitz
\cite{dt} observed that if $L_q$ has no eigenvalues, then $q(x)$ has
support in $(-\infty,a)$ if $R_+(\la)$ has an analytic extension
satisfying the growth condition $\sqrt\la\,R_+(\la)
=O(\E^{-2a\I\sqrt{\la}})$. Combining this result with some Hardy
space theory enabled Zhang \cite{zh} to prove unique continuation
results for the KdV equation. To be able to use the result from
Deift and Trubowitz, commutation methods (see \cite{dt}, \cite{gs}, \cite{gt})
were used to remove all eigenvalues. If one wants to avoid this extra
step, this raises the question what is needed in addition to the growth
condition on $R_+(\la)$ in the case when eigenvalues are present. It
seems that Aktosun \cite{A} was the first to realize that there is
an extra condition on the residue of $R_+(\la)$ at an eigenvalue.
However, it seems he did not notice that this condition, together
with the growth estimate, is also sufficient. This Paley--Wiener
type theorem will be our fist main result, Theorem~\ref{thm:pw}. In
fact, we will establish the result for more general case of
potentials which are asymptotically close to a real-valued,
quasi-periodic, finite-gap potential. We then apply this to
solutions of the KdV equation and prove a unique continuation result
(Theorem~\ref{thm:kdv2}) for the KdV equation in this setting. Again
we extend the results from \cite{zh} to solutions which are not
decaying but rather are asymptotically close to some quasi-periodic,
finite-gap solution $p(x,t)$. While these results  are only special
cases of some more general results which can be proven using modern
harmonic analysis (see for example \cite{ekpv} and the references
therein), we still present them here since the proof is much simpler
and does not require advanced harmonic analysis (note that in the
discrete case an even simpler argument is possible \cite{kt}).

For further results on the Cauchy problem of the KdV equation with initial
conditions supported on a half-line see Rybkin \cite{ry} (cf.\ also Tarama \cite{ta})
and the references therein.

\section{Some general facts on quasi-periodic, finite-gap potentials}
\label{secfgp}

In this section we briefly recall some basic facts on finite gap
potentials needed later one. For further information we refer to,
for example, \cite{GH}, \cite{GRT}, \cite{M}, or \cite{NMPZ}.

Let $L_p$ be  a one-dimensional Schr\"odinger operator with a finite
gap potential $p(x)$ associated with the hyperelliptic Riemann
surface of the square root $Y(\la)^{1/2}$, where
\[
 Y(\la)=-\prod_{j=0}^{2r} (\la-E_j),\quad E_0 < E_1 <
  \dots < E_{2r}.
\]
The spectrum of $L_p$ consists of $r+1$ bands:
\[
\sigma = \sigma(L_p) = [E_0, E_1]\cup\dots\cup[E_{2j-2},E_{2j-1}]\cup\dots\cup[E_{2r},\infty)
\]
and the potential $p(x)$ is uniquely determined by its associated
Dirichlet divisor
\[
 \left\{(\mu_1,\si_1), \dots,(\mu_r, \si_r)\right\},
\]
where $\mu_j \in [ E_{2j-1}, E_{2j}]$ and $\si_j \in \{+1,-1\}$.

We denote by $ \psi_\pm(\la,x)$ the corresponding Weyl solutions of
$L_p\psi_\pm=\la\psi_\pm$, normalized according to
$\psi_\pm(\la,0)=1$ and satisfying $\psi_\pm(\la,.)\in
L^2((0,\pm\infty))$ for $\la\in\mathbb{C}\setminus\si$. These
functions  are meromorphic for $\la\in\C\backslash\si$ with
continuous limits (away from its singularities described below) on
$\si$ from the upper and lower half plane. Unless otherwise stated
we will always chose the limit from the upper half plane (the one
from the lower half plane producing just the corresponding complex
conjugate number).

When there is the need to distinguish between these limits we will cut the complex plane along
the spectrum $\si$ and denote the upper and lower sides of the cuts by $\siu$ and
$\sil$. The corresponding points on these cuts will be denoted by
$\lau$ and $\lal$, respectively. Moreover, we will write
\[
f(\lau) := \lim_{\varepsilon\downarrow0} f(\la+\I\varepsilon),
\qquad f(\lal) := \lim_{\varepsilon\downarrow0}
f(\la-\I\varepsilon), \qquad \la\in\si.
\]
Let $m_\pm(\la)=\frac{\pa}{\pa x}\psi_\pm(\la,0)$ be the Weyl
functions of operator $L_p$. Due to our normalization, for every
Dirichlet eigenvalue $\mu_j$ the Weyl functions might have poles. If
$\mu_j$ is in the interior of its gap, precisely one Weyl function
$m_+(\la)$ or $m_-(\la)$ will have a simple pole. Otherwise, if
$\mu_j$ sits at an edge, both will have a square root singularity.
Hence we divide the set of poles accordingly:
\begin{align*}
M_+ &=\{ \mu_j\mid\mu_j \in (E_{2j-1},E_{2j}) \text{ and } m_+(\la) \text{ has a simple pole}\},\\
M_- &=\{ \mu_j | 1 \le j \le r\} \backslash M_+.
\end{align*}
In addition, we set \beq\label{S2.6} \delta_\pm(z) :=
\prod_{\mu_j\in M_\pm}(z-\mu_j), \quad \tilde{\psi}_\pm(\la,x) :=
\delta_\pm(\la) \psi_\pm(\la,x)\eeq such that $ \tilde{\psi}_\pm$
are analytic for $\la\in\C\backslash\si$.
 Note
that we have chosen $M_-$ such that
\beq\label{1.881}\delta_-(\la)\delta_+(\la) = \prod_{j=1}^r
(\la-\mu_j).\eeq  Finally, introduce the function
\begin{equation}\label{1.88}
g(\la)= -\frac{\prod_{j=1}^r(\la - \mu_j)}{2 Y^{1/2}(\la)} = \frac{1}{W(\psi_+(\la),\psi_-(\la))},
\eeq
where the branch of the square root is chosen such that
\[
\frac{1}{\I} g(\lau) = \Im(g(\lau))  >0 \quad
\mbox{for}\quad \lambda\in\si,
\]
where $W(f,g)= f(x) g'(x) - f'(x) g(x)$ is the usual Wronski determinant.

Recall also the well-known asymptotics
\beq\label{estg}
g(\la) = \frac{\I}{2\sqrt{\la}} + O(\la^{-1})
\eeq
and
\beq\label{asympsi}
\psi_\pm(\la,x,t)=\E^{\pm\I\sqrt{\la} x} \left(1 +
O\Big(\frac{1}{\sqrt{\la}}\Big) \right),
\eeq
as $\la\to\infty$.

\section{Scattering theory in a nutshell}

In this section we give a brief review of scattering theory with respect to
quasi-periodic, finite-gap backgrounds. We refer to \cite{BET} for further details and proofs
(see also \cite{F1}, \cite{F2}, \cite{F3}, \cite{MT}).

Let $L_p$ be a Schr\"odinger operators with a real-valued,
quasi-periodic, finite-gap potentials $p(x)$ as in the previous
section. Let $q(x)$ be a real-valued function satisfying
\beq\label{S.2}
\int_\R (1+|x|^2)|q(x) - p(x)| dx < \infty
\eeq
and let
\[
L_q :=- \frac{d^2}{dx^2} +q(x),\quad x\in \R,
\]
be the ``perturbed" operator. The spectrum of $L_q$ consists of a
purely absolutely continuous part $\sigma$ plus a finite number of
eigenvalues situated in the gaps, \[ \si^d:=
\{\lambda_1,\dots,\lambda_s\}\subset\R\setminus\sigma. \]

The Jost solutions of the equation
\[
\left(-\frac{d^2}{dx^2}+q(x)\right)\phi(x)= \la \phi(x),\quad \la\in \C,
\]
that are asymptotically close to the Weyl solutions of the background
operators as $x\to\pm\infty$
can be represented with the help of the transformation operators as
\begin{equation}\label{S2.2}
\phi_\pm(\la,x) =\psi_\pm(\la,x)\pm\int_{x}^{\pm\infty}
 K_\pm(x,y)\psi_\pm(\la,y) dy,
\end{equation}
where $K_\pm(x,y)$ are real-valued functions satisfying
\begin{equation}\label{A.5}
 K_\pm(x,x)=\pm\frac{1}{2}\int_x^{\pm\infty} (q(y)-p(y))dy.
\end{equation}
\beq\label{A.6} |K_\pm(x,y)|\leq C(x_0)
\int_{\frac{x+y}{2}}^{\pm\infty} |q(z)-p(z)|d z,\quad \pm y>\pm
x>\pm x_0.\eeq Representation \eqref{S2.2} shows, that the Jost
solutions inherit all singularities of the background Weyl solutions
as well as the asymptotics
\beq\label{asymphi}\phi_\pm(\la,x,t)=\E^{\pm \I\sqrt{\la} x} \left(1
+ O\Big(\frac{1}{\sqrt{\la}}\Big) \right), \quad \la\to\infty. \eeq
Hence we set (recall \eqref{S2.6})
\[
\tilde\phi_\pm(\la,x)=\delta_\pm(\la) \phi_\pm(\la,x)
\]
such that the functions $\tilde\phi_\pm(\la,x)$ have no poles in the interior of
the gaps of $\si$. For every eigenvalue we can then
introduce the corresponding norming constants
\[
\left(\gamma_k^\pm\right)^{-1}=\int_\R \tilde\phi_\pm^2(\la_k,x) dx.
\]
Since at every eigenvalue the two Jost solutions must be linearly
dependent, we have \beq \label{ck}\tilde\phi_+(\la_k,x) = c_k
\tilde\phi_-(\la_k,x). \eeq  Furthermore, introduce the scattering
relations
\beq
T(\lambda) \phi_\pm(\lambda,x)
=\overline{\phi_\mp(\lambda,x)} + R_\mp(\lambda)\phi_\mp(\lambda,x),
\quad\lambda\in\siul,
\eeq
where the transmission and reflection
coefficients are defined as usual,
\beq\label{2.17}
T(\lambda):= \frac{W(\overline{\phi_\pm(\la)}, \phi_\pm(\la))}
{W(\phi_\mp(\la),\phi_\pm(\la))}, \quad
R_\pm(\la):= - \frac{W(\phi_\mp(\la),\overline{\phi_\pm(\la)})}
{W(\phi_\mp(\la), \phi_\pm(\la))}, \quad\la\in \siul.
\eeq
Since
\[
T(\lambda)= \frac{W(\psi_+(\la),\psi_-(\la))}{W(\phi_+(\la),\phi_-(\la))} = \frac{1}{g(\la) W(\phi_+(\la),\phi_-(\la))}
\]
the transmission coefficient has a meromorphic extension to the set
$\C\backslash\si$ with simple poles at the eigenvalues $\la_k$ and
residues given by (cf. \cite{BET})\beq\label{residues}
\Res_{\la=\la_k} T(\la) = 2 Y^{1/2}(\la_k) c_k^{\pm 1}\gamma_k^\pm.
\eeq It is important to emphasize that the reflection coefficients
in general do not have a meromorphic extension.

The sets
\begin{align}\nn
\mathcal{S}_\pm(q) := \Big\{ & R_\pm(\la),\; \la\in\si; \:
\lambda_1,\dots,\lambda_s\in\R\setminus \sigma,\;
\gamma_1^\pm,\dots,\gamma_s^\pm\in\R_+\Big\}
\end{align}
are called the right, left scattering data, respectively. Given $p(x)$, the
potential $q(x)$ can be uniquely recovered from each one of them as follows:

The kernels $K_\pm(x,y)$ of the transformation operators satisfy
the Gelfand-Levitan-Marchenko (GLM) equations
\begin{equation}\label{ME}
K_\pm(x,y) + F_\pm(x,y) \pm \int_x^{\pm\infty} K_\pm(x,z)
F_\pm(z,y)d z =0, \quad \pm y>\pm x,
\end{equation}
where \footnote{Here we have used the notation
$\oint_\sigma f(\lambda)d\la := \int_{\siu} f(\lambda)d\la -
\int_{\sil} f(\lambda)d\la$.}
\begin{align}\label{4.2}
F_\pm(x,y) &= \frac{1}{2\pi\I}\oint_\sigma R_\pm(\lambda)
\psi_\pm(\lambda,x) \psi_\pm(\lambda,y) g(\lambda)d\la\\ \nn &\quad
+ \sum_{k=1}^s \gamma_k^\pm \tilde\psi_\pm(\lambda_k,x)
\tilde\psi_\pm(\lambda_k,y).
\end{align}
Conversely, given $\mathcal{S}_\pm(q)$ we can compute $F_\pm(x,y)$
and solve \eqref{ME} for $K_\pm(x,y)$. The potential $q(x)$ can then
be recovered from \eqref{A.5}.

\section{Perturbations with finite support on one side and the nonlinear Paley--Wiener theorem}

In this section we want to look at the special case where $q(x)$ will be equal to $p(x)$ for
$x\leq a$ or $x\ge b$. Our main result in this section is the following theorem:

\begin{theorem}[Nonlinear Paley--Wiener]\label{thm:pw}
Suppose $q(x)$ satisfies \eqref{S.2}. Then we have $q(x)=p(x)$ for
$x\le a$ if and only if $\delta_+(\la)R_-(\la)$ has an analytic
extension to $\C\backslash(\si\cup\si^d)$ such that
\begin{align}\label{cond1}
&\Res_{\la=\la_k} \frac{g(\la)}{\delta_-(\la)^2} R_-(\la) = \gamma^-_k,\\\label{cond12}
&\sqrt \la R_-(\la)= O(\E^{2a \I\sqrt{\la}}) \quad \text{as}\  \la\to\infty.
\end{align}
Similarly, we have $q(x)=p(x)$ for $x\ge b$ if and only if
$\delta_-(\la)R_+(\la)$ has an analytic extension to
$\C\backslash(\si\cup\si^d)$ such that
\begin{align}\label{cond2}
& \Res_{\la=\la_k} \frac{g(\la)}{\delta_+(\la)^2} R_+(\la)= \gamma^+_k,\\\label{cond22}
&\sqrt\la R_+(\la)= O(\E^{-2\I b
\sqrt{\la}}) \quad \text{as}\ \la\to\infty.
\end{align}
\end{theorem}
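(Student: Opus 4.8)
The plan is to pass, via the Gelfand--Levitan--Marchenko equation \eqref{ME}, from the support of the perturbation to the support of the kernel $F_-$, and then to read off the analytic behaviour of $R_-$ from a contour deformation in \eqref{4.2}; I discuss only the left statement, the right one being identical after interchanging the roles of $\pm$. First I would prove the equivalence that $q=p$ on $(-\infty,a]$ holds if and only if $F_-(x,y)=0$ for $x+y<2a$. For the forward implication, \eqref{A.6} gives $K_-(x,y)=0$ once $\frac{x+y}{2}\le a$; reading \eqref{ME} for fixed $y$ as a Volterra equation in $x$ with kernel $K_-(x,z)$, integration $\int_{-\infty}^x$, and right-hand side $-K_-(x,y)$, the vanishing of this right-hand side for $x<2a-y$ forces $F_-(x,y)=0$ there by Volterra uniqueness. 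For the converse, if $F_-$ vanishes for $x+y<2a$ then in \eqref{ME} both $F_-(x,y)$ and the integrand $K_-(x,z)F_-(z,y)$ vanish for $x+y<2a$ (since $z<x<2a-y$ yields $z+y<2a$), whence $K_-(x,y)=0$; differentiating $K_-(x,x)$ and using \eqref{A.5} returns $q=p$ on $(-\infty,a]$.

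It then remains to characterise, in terms of $R_-$, when $F_-(x,y)=0$ for $x+y<2a$. Writing the continuous part of \eqref{4.2} as $\frac{1}{2\pi\I}\oint_\sigma G(\lambda)\,d\lambda$ with $G(\lambda)=\frac{g(\lambda)}{\delta_-(\lambda)^2}R_-(\lambda)\tilde\psi_-(\lambda,x)\tilde\psi_-(\lambda,y)$, the sufficiency direction is a contour deformation: by \eqref{estg}, \eqref{asympsi} and \eqref{cond12} the integrand is $O(\lambda^{-1}\E^{\I\sqrt\lambda(2a-x-y)})$, which decays in both half-planes exactly when $x+y<2a$, so $\oint_\sigma G$ equals $-2\pi\I$ times the sum of the residues of $G$ off $\sigma$; by \eqref{cond1} the residue at each $\lambda_k$ is $\gamma_k^-\tilde\psi_-(\lambda_k,x)\tilde\psi_-(\lambda_k,y)$, cancelling the discrete sum in \eqref{4.2} and yielding $F_-\equiv0$. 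For the necessity direction I would use that $q=p$ on $(-\infty,a]$ makes the left Jost solution coincide with the Weyl solution, $\phi_-(\lambda,x)=\psi_-(\lambda,x)$ for $x\le a$; inserting this into the scattering relation and the Wronskian formula \eqref{2.17} exhibits $\delta_+R_-$ as built from the meromorphic data $T$, $\tilde\phi_+$ and $\tilde\psi_\pm$, hence analytic off $\sigma\cup\sigma^d$, while evaluating at $x=a$ and using \eqref{asymphi}, \eqref{asympsi} produces the rate $\E^{2a\I\sqrt\lambda}$, and the residue \eqref{cond1} follows by matching against the residue \eqref{residues} of $T$ together with \eqref{ck}.

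The hard part is the necessity direction, which carries the genuine (nonlinear) Paley--Wiener content: one must show that the continuation of $\delta_+R_-$ exists at all, with the sharp exponential rate in \eqref{cond12} rather than a weaker bound, and with the exact normalisation \eqref{cond1}; this last point needs the asymptotics of the Wronskian $W(\phi_+,\overline{\psi_-})$ at $x=a$ pushed to subleading order, since the leading terms cancel. A second delicate point, absent in the free case, is that $G$ has, besides the poles at the eigenvalues $\lambda_k$, apparent poles at the Dirichlet points $\mu_j\in M_-$ coming from the factor $1/\delta_-$ in $g/\delta_-^2$; for the contour deformation to pick up only the eigenvalue residues one must invoke the behaviour of $R_-$ at the Dirichlet divisor supplied by the background scattering theory, so that these contributions drop out.
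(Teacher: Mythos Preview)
Your plan is essentially the paper's: the sufficiency direction is exactly the contour deformation plus residue cancellation you describe, and the necessity direction is the same Wronskian computation. The paper, however, organises necessity more efficiently than you suggest. Rather than starting from $\phi_-=\psi_-$ and working with $W(\phi_+,\overline{\psi_-})$, it writes, for $x\le a$,
\[
\phi_+(\la,x)=\alpha(\la)\psi_+(\la,x)+\beta(\la)\psi_-(\la,x),\qquad
\alpha=-g\,W(\psi_-,\phi_+),\quad \beta=g\,W(\psi_+,\phi_+),
\]
so that $T^{-1}=\alpha$ and $R_-=\beta/\alpha$. From these explicit formulas the meromorphic extension, the growth \eqref{cond12} (via \eqref{estg}, \eqref{asympsi}, \eqref{asymphi} and $\alpha=1+o(1)$), and the residue identity \eqref{cond1} (via $\tilde\beta(\la_k)=c_k$ and \eqref{residues}) drop out without any ``subleading cancellation''; what you call the hard part is in fact routine once $\beta$ is written as $g\,W(\psi_+,\phi_+)$ evaluated at $x=a$.

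Your second worry, about spurious residues at $\mu_j\in M_-$ in the contour deformation, is well spotted: the paper simply asserts that the poles of the integrand are ``precisely at the eigenvalues $\la_k$'' without further comment. The necessity computation actually explains why this is harmless in practice: the paper observes that $\beta$ has simple zeros at $\mu_j\in M_-$, hence so does $R_-=\beta/\alpha$, which kills the apparent simple pole of $\psi_-^2 g$ there. So in the forward direction the issue never arises; in the sufficiency direction the paper does not isolate this as a separate hypothesis, and your instinct that one must check it is correct.
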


\begin{proof}
Suppose first that $q(x)=p(x)$ for $x\le a$. Then we have
\[
\phi_+(\la,x) =  \alpha(\la) \psi_+(\la,x) + \beta(\la) \psi_-(\la,x), \qquad x \leq a,
\]
and thus
\begin{align*}
\alpha(\la) &= \frac{W(\psi_-(\la),\phi_+(\la))}{W(\psi_-(\la),\psi_+(\la))}
= -g(\la) W(\psi_-(\la),\phi_+(\la)),\\
\beta(\la) &= -\frac{W(\psi_+(\la),\phi_+(\la))}{W(\psi_-(\la),\psi_+(\la))}
= g(\la) W(\psi_+(\la),\phi_+(\la)),
\end{align*}
where the Wronskians can be evaluated at any $x\leq a$. In
particular, $\alpha(\la)$ is analytic in $\C\backslash\si$ and
$\beta(\la)$ is meromorphic in $\C\backslash\si$ with the only
simple poles at $\la\in M_+$. Note also that $\beta(\la)$ has simple
zeros at $\la\in M_-$and thus
\beq\label{betatilde}
\tilde{\beta}(\la)= \frac{\delta_+(\la)}{\delta_-(\la)} \beta(\la)
\eeq
is analytic in $\C\backslash\si$. Hence, since $\alpha(\la)$
vanishes at each eigenvalue $\la_k$, evaluating
\[
\tilde\phi_+(\la,x) =  \alpha(\la) \tilde\psi_+(\la,x) + \tilde\beta(\la) \tilde\psi_-(\la,x),
\qquad x \leq a,
\]
at $\la_k$ shows $\tilde{\beta}(\la_k) = c_k$ and formula
\eqref{cond1} follows from \eqref{1.881}, \eqref{1.88},
\eqref{residues}, \eqref{betatilde} and
\[
R_-(\la) = \frac{\beta(\la)}{\al(\la)},
\]
respectively,
\[
g(\la)R_-(\la)\delta_-^{-2}
(\la)=\tilde\beta(\la)T(\la)(2Y^{1/2}(\la))^{-1}
\]
The asymptotic behavior \eqref{cond12} follows using the well-known asymptotical
formula $\al(\la)= T(\la)^{-1} = 1+o(1)$, \eqref{estg}, \eqref{asympsi}, and \eqref{asymphi}.
This finishes the first part.

To see the converse, note that the growth estimate implies that we
can evaluate the integral in \eqref{4.2} by the residue theorem by
using a large circular arc of radius $r$ whose contribution will
vanish as $r\to\infty$ by the Jordan Lemma. Hence the integral in
\eqref{4.2} is just the sum over the residues which are precisely at the
eigenvalues $\la_k$ and by our conditions \eqref{cond1}
on the poles of integrand   it will cancel with the other sum in
\eqref{4.2}. Thus $F(x,y)=0$ for $y<x<a$ which by the GLM equation
implies $K_-(x,y)=0$ for $y<x<a$ which finally implies $p(x)-q(x)=0$
for $y<x<a$.
\end{proof}

As a consequence note that the scattering data $\mathcal{S}_\pm(q)$ are
determined by $R_\pm(\la)$ alone in such a situation since the eigenvalues and
norming constants can be read off from the poles of $R_\pm(\la)$. In
particular, combining this result with the results from  \cite{BET}
we can give the following characterization of scattering data
which give rise to potential supported on a half line.

Let
\beq
\om_{z z^*} = \left( \frac{Y^{1/2}(z)}{\la - z} + P_{z z^*}(\la) \right)
\frac{d\la}{Y^{1/2}(\la)}
\eeq
be the normalized Abelian differential of the third
kind with poles at $z$ and $z^*$ on the Riemann surface associated
with the function $Y^{1/2}(\la)$. Here $P_{z z^*}(z)$ is a polynomial of
degree $g-1$ which has to be chosen such that $\om_{z z^*}$ has vanishing
$a$-periods (the $a$-cycles are chosen to surround
the gaps of the spectra, changing sheets twice). Furthermore, let
\beq
B(\la,z)=\exp\left(\int_{E_0}^\la \om_{z z^*}\right)
\eeq
be the Blaschke factor on this surface (see e.g.\ \cite {MT} or \cite{tistalg} for more details).
Then, as a corollary of Theorem~\ref{thm:pw} and Theorem~4.3 of \cite{BET} we obtain

\begin{theorem}(Characterization)
Suppose $q(x)$ satisfies \eqref{S.2} and $q(x)=p(x)$ for $x\le a$.

Then a function $R_-(\la)$ is the reflection coefficient for an operator $L_q$
with such a potential if and only if the following conditions are fulfilled:
\begin{itemize}
\item
The function $R_-(\la)$ is continuous on the
set $\siu\cup\sil$ and possess the symmetry property $R_-(\lau)
=\overline{R_-(\lal)}$. Moreover, $|R_-(\la)|<1$ for
$\la\notin\pa\si$ and

\noindent $|R_-(\la)|\leq 1-C|\la-E|$ in a small vicinity of each
point $E\in\pa\si$.

\noindent If $|R_-(E)|=1$, then
\[
R_-(E)=
\begin{cases}
-1 &\text{for } E\notin M_-,\\
1 &\text{for } E\in M_-.
\end{cases}
\]
\item
The function $R_-(\la)\delta_+(\la)$ admits an analytic
continuation to $\C\setminus\{\si\cup\si_d\}$, where
$\si_d=\{\la_1,...,\la_s\}\subset \R\setminus \si $ is a
finite number of real points. Moreover, the function
$g(\la)\delta_-(\la)^{-2} R_-(\la)$ has simple poles at the points
$\la_k$ with
\[
\Res_{\la=\la_k} \frac{g(\la)}{\delta_-(\la)^2}
R_-(\la) >0.
\]
\item
For all large $\la\in\C$ $$\sqrt \la R_-(\la)= O(\E^{2a
\I\sqrt{\la}}).$$
\item
The function $Y^{1/2}(\la)T^{-1}(\la)$, where
\[
T(\la)=\prod_{k=1}^s B^{-1}(\la,\la_k)\exp\left(\frac{1}{2\pi\I}
\oint_\si \log(1-|R_-|^2)w_{\la\la^*}\right),
\]
is continuous up to the boundary $\siu\cup\sil$.
\item
The function
\[
F_{+,c}(x,y)=\oint_\si
\ov{R_-(\la)T^{-1}(\la)}\,T(\la)\psi_+(\la,x)\psi_+(\la,y)g(\la)d\la
\]
satisfies the estimates
\begin{align*}
&|F_{+,c}(x,y)|+\left|\frac{\pa}{\pa x} F_{+,c}
(x,y)\right|\leq Q\left(x+y\right),\\
& \int_0^\infty \left|\frac{d}{dx} F_{+,c}(x,x)\right|(1+ x^2)\,dx<\infty,
\end{align*}
where $Q(x)$ is a continuous, positive, decaying as $x\to +\infty$,
function with $x Q(x)\in L^1(0,\infty)$.
\end{itemize}
\end{theorem}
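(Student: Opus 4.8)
The plan is to read this statement purely as the specialization of the general inverse-scattering characterization in \cite{BET} (Theorem~4.3) to the class of potentials singled out by Theorem~\ref{thm:pw}, so that the proof amounts to collating the two lists of conditions rather than re-deriving either. The conceptual bridge is the observation recorded just above the statement: once $q(x)=p(x)$ for $x\le a$, the function $\delta_+(\la)R_-(\la)$ extends meromorphically to $\C\setminus(\si\cup\si_d)$, its poles in $\R\setminus\si$ are exactly the eigenvalues $\la_k$, and the residues of $g(\la)\delta_-(\la)^{-2}R_-(\la)$ recover the norming constants through \eqref{cond1}. Hence the discrete part $\{\la_k,\gamma_k^-\}$ of the left scattering data $\mathcal{S}_-(q)$ is encoded in $R_-$ alone, and the characterization need only list conditions on $R_-$.

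For necessity I would suppose $R_-$ is the left reflection coefficient of an $L_q$ with $q$ satisfying \eqref{S.2} and $q=p$ for $x\le a$. The first, fourth and fifth bullets, together with reality of the $\la_k$ and positivity $\gamma_k^-\in\R_+$, are then precisely the assertions of \cite{BET} (Theorem~4.3) applied to the data $\mathcal{S}_-(q)$, while the second and third bullets are exactly the content of Theorem~\ref{thm:pw}, namely the analytic continuation of $\delta_+(\la)R_-(\la)$, the residue identity \eqref{cond1} (the positivity $\Res_{\la=\la_k}g(\la)\delta_-(\la)^{-2}R_-(\la)=\gamma_k^->0$ giving the sign asserted in the second bullet), and the growth estimate \eqref{cond12}.

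For sufficiency I would start from a function $R_-$ meeting all five conditions, read off the finite point set $\si_d=\{\la_1,\dots,\la_s\}$ as the poles of the meromorphic continuation of $\delta_+(\la)R_-(\la)$ in $\R\setminus\si$, and define the $\gamma_k^-$ by the residue in the second bullet, which is positive by hypothesis. This produces a complete candidate left scattering data set $\mathcal{S}_-(q)$. The first, fourth and fifth conditions are then the hypotheses of \cite{BET} (Theorem~4.3), which yields a real potential $q$ satisfying \eqref{S.2} whose left reflection coefficient is $R_-$ and whose eigenvalues and norming constants are the $\la_k,\gamma_k^-$. Finally the analytic-continuation, residue and growth conditions are exactly \eqref{cond1} and \eqref{cond12}, so the converse direction of Theorem~\ref{thm:pw} forces $q=p$ for $x\le a$, completing the proof.

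The main obstacle is the precise reconciliation of the two characterizations, and in particular the reduction of the Marchenko condition to the continuous kernel $F_{+,c}$. In \cite{BET} (Theorem~4.3) the weighted integrability is imposed on the full kernel $F_+$ of \eqref{4.2}, which also carries the discrete sum $\sum_k\gamma_k^+\tilde\psi_+(\la_k,\cdot)\tilde\psi_+(\la_k,\cdot)$; here only $F_{+,c}$ appears. I would justify this by the residue computation already used in the converse part of Theorem~\ref{thm:pw}: the estimate $\sqrt\la R_-(\la)=O(\E^{2a\I\sqrt\la})$ lets one close the contour in the oscillatory integral, so the discrete contribution is slaved to the analytic structure of $R_-$ rather than being an independent datum. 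One must also carry through the left--right bookkeeping, rewriting the right-side objects $T$ and $F_{+,c}$ through $R_-$ via the trace formula of the fourth bullet and the scattering relation identifying $\ov{R_-(\la)T^{-1}(\la)}\,T(\la)$ with the right reflection coefficient $R_+(\la)$ up to sign, so that $F_{+,c}$ is, up to a constant, the continuous part of the right Marchenko kernel $F_+$. Verifying that this translation is consistent, and that reading off $\{\la_k,\gamma_k^-\}$ from $R_-$ gains or loses no eigenvalue data, is where the real care is needed.
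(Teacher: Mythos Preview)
Your proposal is correct and matches the paper's approach exactly: the paper offers no proof beyond the one-line remark that the theorem is obtained ``as a corollary of Theorem~\ref{thm:pw} and Theorem~4.3 of \cite{BET}'', which is precisely the combination you spell out. Your more careful discussion of the left--right bookkeeping and of why the Marchenko condition need only be imposed on $F_{+,c}$ goes beyond what the paper provides, but is consistent with the intended argument.
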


Note that given a reflection coefficient $R_-(\la)$ as in the previous theorem,
we could form a set of scattering data by choosing arbitrary eigenvalues plus
corresponding norming constants. Then, as long as we take the known algebraic
constraints (see \cite{tistalg}) into account,  we still get a potential $q(x)$ satisfying
\eqref{S.2} by inverse scattering. However, unless \eqref{cond12} holds, this
potential will not satisfy $q(x)=p(x)$.

\section{Applications to KdV}

Finally, we want to show how our main result can be used to prove unique continuation results
for the KdV and MKdV equations.

Let $p(x,t)$ be a real-valued, quasi-periodic, finite-gap solution
of the KdV equation \eqref{KdV} and suppose $q(x,t)$ is a
(classical) solution of \eqref{KdV}, satisfying
\beq\label{conduniq}
\int_{\mathbb R} \big( |q(x,t) - p(x,t)| + |q_t(x,t) - p_{t}(x,t)|
\big) (1+|x|^2)dx <\infty
\eeq
for all $t\in\R$. For the existence
of such solutions we refer to \cite{EGT}, \cite{ET} (see also
\cite{F4}). Then all considerations from the previous section apply
to the operator $L_q(t)$ if we consider $t$ as an additional
parameter. Moreover, the time evolution of the scattering data can
be computed explicitly and is given in the next lemma:

\begin{lemma}[\cite{EGT}]
Let $q(x,t)$ be a solution of the KdV equation satisfying \eqref{conduniq}.
Then $\la_k(t)=\la_k(0)\equiv \la_k;$
\begin{align}\label{refl}
R_\pm(\la,t) &= R_\pm(\la,0)\E^{\alpha_\pm(\la,t)
-\ov{\alpha_\pm(\la,t)}}, \quad \la\in\si, \\ \label{trans} T(\la,t)
&= T(\la,0),\quad\la\in\C,\\ \label{norm} \gamma_k^\pm(t) &=
\gamma_k^\pm(0) \,
\frac{\delta_\pm^2(\la_k,0)}{\delta_\pm^2(\la_k,t)}\,
\E^{2\alpha_\pm(\la_k,t)},
\end{align}
where $\delta_\pm(\la,t)$ is defined  as in \eqref{S2.6} with
$\mu_j^\pm=\mu_j^\pm(t)$, \beq \alpha_\pm(\la,t) := \int_0^t
\left(2(p(0,s) + 2\la) m_\pm(\la,s) - \frac{\pa p(0,s)}{\pa
x}\right)ds \eeq and $m_\pm(\la,t)$ are the Weyl functions of
operator $L_{p}(t)$.
\end{lemma}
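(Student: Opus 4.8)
The plan is to exploit the Lax pair representation of \eqref{KdV} and to track every object from Sections 3--4 as a function of the parameter $t$. Equation \eqref{KdV} is equivalent to the Lax equation $\pa_t L_q = [P_q,L_q]$ with the skew-adjoint operator $P_q = -4\pa_x^3 + 6q\pa_x + 3q_x$, and the same relation holds for the background with $L_p$ and $P_p$. Since $P_q$ is skew-adjoint the associated propagator is unitary, so $L_q(t)$ is isospectral; concretely, for an eigenvalue $\la_k(t)$ with normalized eigenfunction $u_k$ one differentiates $L_q u_k = \la_k u_k$ and obtains $\dot\la_k = \spr{u_k}{\pa_t L_q\, u_k} = \spr{u_k}{[P_q,L_q]u_k} = 0$ directly from the self-adjointness of $L_q$. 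This proves $\la_k(t)\equiv\la_k$.

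The key step is the evolution law for the Jost solutions. First I would treat the background: because $p$ solves \eqref{KdV}, the function $\pa_t\psi_\pm - P_p\psi_\pm$ again solves $L_p(\cdot)=\la(\cdot)$ and inherits the Weyl asymptotics of $\psi_\pm$ at $\pm\infty$, hence equals $a_\pm(\la,t)\,\psi_\pm$ for a scalar $a_\pm$. Evaluating at $x=0$, where $\psi_\pm(\la,0,t)=1$ and $\pa_x\psi_\pm(\la,0,t)=m_\pm$, and using the differential equation to express $\pa_x^2\psi_\pm(\la,0)$ and $\pa_x^3\psi_\pm(\la,0)$ in terms of these data, gives $a_\pm = \pa_x p(0,t) - (2p(0,t)+4\la)m_\pm = -\pa_t\al_\pm$. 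For the perturbed problem, $\chi_\pm := \pa_t\phi_\pm - P_q\phi_\pm$ solves $L_q\chi_\pm=\la\chi_\pm$ by the Lax relation, and matching its asymptotics as $x\to\pm\infty$ against the background computation (using $q\to p$ and $\phi_\pm\to\psi_\pm$ together with their first three derivatives) shows $\chi_\pm = a_\pm\phi_\pm$, i.e.\ $\pa_t\phi_\pm = P_q\phi_\pm + a_\pm\phi_\pm$. Justifying this asymptotic matching, in the gaps where $\psi_\mp$ grows and on the bands where both solutions oscillate, is the main analytic subtlety of the argument.

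Granting the evolution of $\phi_\pm$, the transmission and reflection coefficients follow algebraically. Differentiating the scattering relation $T\phi_\pm = \ov{\phi_\mp} + R_\mp\phi_\mp$ in $t$, substituting $\pa_t\phi_\pm = P_q\phi_\pm + a_\pm\phi_\pm$ (and its conjugate for $\ov{\phi_\mp}$, using that $q$ is real so $P_q$ has real coefficients), the terms containing $P_q$ cancel against $P_q$ applied to the scattering relation itself. Matching the coefficients of the independent solutions $\ov{\phi_\mp}$ and $\phi_\mp$ then yields $\dot T/T = \ov{a_\mp} - a_\pm$ and $\dot R_\pm = (\ov{a_\pm}-a_\pm)R_\pm$. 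At this point I would invoke the finite-gap identity $m_-(\lau)=\ov{m_+(\lau)}$ on $\si$ --- which holds because $m_++m_-$ is a real rational function of $\la$ while $m_+-m_-$ is purely imaginary on the bands (there $Y^{1/2}$ is imaginary) --- to conclude $\ov{a_\mp}=a_\pm$, hence $\dot T=0$, which is \eqref{trans}. Integrating the equation for $R_\pm$ and using $a_\pm=-\pa_t\al_\pm$ together with $\al_\pm(\la,0)=0$ gives \eqref{refl}.

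The remaining and most delicate point is \eqref{norm}. Writing $\tilde\phi_+(\la_k)=c_k\tilde\phi_-(\la_k)$ and differentiating in $t$, once more the $P_q$-terms cancel and one is left with $\dot c_k = c_k\bigl(\pa_t\log(\delta_+/\delta_-) + a_+ - a_-\bigr)$ evaluated at $\la=\la_k$. Since $T$ and $Y^{1/2}(\la_k)$ are $t$-independent, the residue formula \eqref{residues} shows that $c_k^{\pm1}\gamma_k^\pm$ is conserved, so $\gamma_k^\pm(t)$ is determined by $c_k(t)$. The main obstacle is reconciling the resulting expression with the stated form: this requires the background identity $\delta_+(\la_k,t)\delta_-(\la_k,t)\big/\bigl(\delta_+(\la_k,0)\delta_-(\la_k,0)\bigr) = \E^{\al_+(\la_k,t)+\al_-(\la_k,t)}$, which encodes the Dirichlet divisor (Dubrovin) dynamics of the finite-gap background and is the place where the detailed structure of $p(x,t)$ genuinely enters. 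Combining this identity with the $c_k$-equation and the conservation of $\Res_{\la=\la_k}T(\la)$ then yields \eqref{norm}.
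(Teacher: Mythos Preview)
The paper does not prove this lemma: it is quoted verbatim from \cite{EGT}, so there is no in-paper argument to compare your proposal against.  Your Lax-pair approach is the standard one and is essentially the argument given in \cite{EGT}; the computations of $a_\pm$, of $\dot T/T=\ov{a_\mp}-a_\pm$, and of $\dot R_\pm=(\ov{a_\pm}-a_\pm)R_\pm$ all check out, and you have correctly isolated the one genuinely nontrivial ingredient for \eqref{norm}, namely the background identity
\[
\frac{\delta_+(\la_k,t)\,\delta_-(\la_k,t)}{\delta_+(\la_k,0)\,\delta_-(\la_k,0)}=\E^{\al_+(\la_k,t)+\al_-(\la_k,t)}.
\]

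Two remarks.  First, you do not need the full Dubrovin dynamics for this identity.  By \eqref{1.881}--\eqref{1.88} one has $\delta_+\delta_-=\prod_j(\la-\mu_j)=2Y^{1/2}(\la)/(m_+-m_-)$, so the identity reduces to $\pa_t\log(m_+-m_-)=-(\pa_t\al_++\pa_t\al_-)$.  Differentiating $m_\pm=\pa_x\psi_\pm(\la,0,t)$ and using $\pa_t\psi_\pm=P_p\psi_\pm+a_\pm\psi_\pm$ gives the Riccati-type evolution
\[
\dot m_\pm=-p_{xx}(0)+2p_x(0)m_\pm+(2p(0)+4\la)\bigl((p(0)-\la)-m_\pm^2\bigr),
\]
from which $\dot m_+-\dot m_-=(m_+-m_-)\bigl(2p_x(0)-(2p(0)+4\la)(m_++m_-)\bigr)$ and hence the desired relation follow immediately.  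Second, a technical caveat: the partition $M_\pm(t)$, and with it $\delta_\pm(\la,t)$ individually, can jump when some $\mu_j(t)$ hits a band edge, so your differentiation of $c_k(t)$ is literally valid only between such times; the clean way around this is to phrase the argument in terms of the edge-regular quantities $\delta_+\delta_-$ and $c_k^{\pm1}\gamma_k^\pm$ (both of which are smooth in $t$) and to unpack into $\delta_\pm$ only at the end.
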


Our first result reads

\begin{theorem}\label{thm:kdv1}
Let $p(x,t)$ be a quasi-periodic, finite-gap solution of the KdV equation and
$q(x,t)$ a solution of the KdV equation satisfying \eqref{conduniq}.
Suppose that $q(x,t) = p(x,t)$ for $x<a$ at two times $t_0 \ne t_1$.
Then $q(x,t) = p(x,t)$ for all $(x,t)\in\R^2$.
\end{theorem}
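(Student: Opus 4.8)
The plan is to use Theorem~\ref{thm:pw} to translate the hypothesis $q(x,t)=p(x,t)$ for $x<a$ into analytic conditions on the reflection coefficient, and then exploit the explicit time evolution \eqref{refl} to show that these conditions can hold at two distinct times only if they hold for all $t$. Concretely, I would work with the left reflection coefficient $R_-(\la,t)$, since the condition $q=p$ on the left half-line $x<a$ is exactly what Theorem~\ref{thm:pw} characterizes through \eqref{cond12}. At each of the two times $t_0,t_1$ the hypothesis gives, via Theorem~\ref{thm:pw}, that $\sqrt\la\,R_-(\la,t_j)=O(\E^{2a\I\sqrt\la})$ as $\la\to\infty$, together with the residue condition \eqref{cond1} and the analytic extension of $\delta_+(\la)R_-(\la,t_j)$.

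The heart of the argument is the evolution formula \eqref{refl}, namely
\[
R_-(\la,t) = R_-(\la,0)\,\E^{\alpha_-(\la,t)-\ov{\alpha_-(\la,t)}}.
\]
I would first determine the leading large-$\la$ asymptotics of the exponent. Using $m_\pm(\la,t)=\pm\I\sqrt\la+o(\sqrt\la)$ (the Weyl functions of the background inherit the free asymptotics, consistent with \eqref{asympsi}), the dominant term of $\alpha_-(\la,t)=\int_0^t\big(2(p(0,s)+2\la)m_-(\la,s)-p_x(0,s)\big)\,ds$ is governed by the $2\cdot2\la\cdot m_-$ piece, which behaves like $-4\I\la^{3/2}t$ to leading order. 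The key point is that this leading contribution is \emph{purely imaginary} on the real axis, so that on $\si$ the factor $\E^{\alpha_--\ov{\alpha_-}}$ is a phase of modulus one; but when one continues $R_-(\la,t)$ off the real axis and tracks the growth in the upper half $\la$-plane, the exponent $\alpha_-(\la,t)-\ov{\alpha_-(\la,t)}$ produces a factor of the form $\E^{-2\I t\,\cdot\,4\la^{3/2}+\cdots}$, i.e.\ a growth/decay rate proportional to $t$ along the relevant directions. Comparing the growth estimate $R_-(\la,t_j)=O(\E^{2a\I\sqrt\la}/\sqrt\la)$ at the two times therefore pins down, after dividing the two relations, a constraint of the form $\E^{(\alpha_-(\la,t_1)-\alpha_-(\la,t_0))-\overline{(\cdots)}}=O(1)$ along a suitable sequence $\la\to\infty$, whose $\la^{3/2}$-rate can only be reconciled with $t_0\neq t_1$ if the corresponding coefficient vanishes, forcing a rigidity.

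From this rigidity I would extract that the support parameter $a$ cannot be a finite number for both times unless the perturbation is trivial: more precisely, subtracting the two growth exponents gives a term growing like $|t_1-t_0|\,|\la|^{3/2}$ in a direction where $\E^{2a\I\sqrt\la}$ only grows like $|\la|^{1/2}$ in the exponent, so the only way $\sqrt\la\,R_-(\la,t)$ can satisfy a finite-$a$ Paley--Wiener bound at two different times is $R_-(\la,\cdot)\equiv 0$. Once $R_-(\la,t_0)\equiv 0$, the residue conditions \eqref{cond1} force the norming constants $\gamma_k^-$ to vanish as well, so there are no eigenvalues and the left scattering data $\mathcal S_-(q)$ is that of the unperturbed background; by the inverse scattering uniqueness recalled in Section~3 (the GLM construction from $\mathcal S_-(q)$), this yields $q(x,t_0)=p(x,t_0)$ for all $x$, and then uniqueness of the KdV flow (with the background solution as Cauchy data agreeing at $t_0$) propagates $q\equiv p$ to all $(x,t)$.

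The main obstacle I anticipate is the off-axis growth bookkeeping in the second paragraph: one must carefully justify that $\alpha_-(\la,t)-\ov{\alpha_-(\la,t)}$ genuinely contributes an unbounded $t$-dependent exponential factor in the half-plane where the Paley--Wiener bound \eqref{cond12} is tested, rather than merely an oscillatory phase on $\si$. This requires controlling the Weyl function asymptotics $m_-(\la,s)$ uniformly in $s\in[0,t]$ and identifying the precise sign/direction of the $\la^{3/2}$ term, since the whole rigidity hinges on that leading behavior dominating the $\E^{2a\I\sqrt\la}$ factor. Handling the branch of $\sqrt\la$ and the choice of continuation (upper versus lower sheet) consistently through \eqref{refl} is the delicate bookkeeping, but once it is done the contradiction for $t_0\neq t_1$ is immediate.
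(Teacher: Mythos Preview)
Your strategy is exactly the paper's: translate the half-line support hypothesis into the growth bound \eqref{cond12} via Theorem~\ref{thm:pw}, compute the leading $\la^{3/2}$ asymptotics of the time-evolution exponent in \eqref{refl}, and observe that a $|\la|^{3/2}$-rate exponential cannot coexist with an $O(\E^{2a\I\sqrt\la})$ bound at two distinct times unless the perturbation is trivial.

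There is, however, a genuine logical slip in your rigidity step. Saying that the two bounds $R_-(\la,t_j)=O(\E^{2a\I\sqrt\la})$ may be ``divided'' to produce $\E^{(\alpha_-(\la,t_1)-\alpha_-(\la,t_0))-\overline{(\cdots)}}=O(1)$ is not valid: a ratio of two quantities each of size $O(g)$ is not $O(1)$ without a \emph{lower} bound on the denominator, and nothing you have written provides one. The paper circumvents this cleanly by first replacing $a$ with the \emph{maximal} $a^*$ for which $q(\cdot,0)=p(\cdot,0)$ on $(-\infty,a^*)$; the converse direction of Theorem~\ref{thm:pw} then guarantees that the estimate $\sqrt\la\,R_-(\la,0)=O(\E^{2a^*\I\sqrt\la})$ cannot be improved to any $a'>a^*$. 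Now the factor coming from \eqref{refl}, with $\alpha_-(\la,t)=-8\I t\,\la^{3/2}(1+o(1))$, forces $R_-(\la,0)$ to satisfy an $O(\E^{2a'\I\sqrt\la})$ bound for \emph{every} $a'$ (in order that the product still obey the bound at $t_1$), which is the forbidden improvement. This yields $q(\cdot,0)\equiv p(\cdot,0)$ directly, so your detour through ``$R_-\equiv 0\Rightarrow\gamma_k^-=0\Rightarrow$ no eigenvalues $\Rightarrow$ inverse scattering'' is unnecessary; and propagation to all $t$ then follows already from the trivial time evolution of the scattering data, without invoking a separate KdV uniqueness theorem. The off-axis bookkeeping you flag as the main obstacle is real, but the issue above is structural and independent of it.
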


\begin{proof}
Without loss we can choose $t_0=0$. Then $\sqrt\la R_-(\la,0) =
O(\E^{2a\I\sqrt{\la}})$. If $q(.,0) \neq p(.,0)$ we can choose $a$
maximal and this estimate cannot be improved! Thus $\alpha_-(\la,t)=
 -8\I t \la^{3/2} (1+o(1))$ shows that the same estimate cannot hold
for another $t\ne 0$ and we are done.
\end{proof}

This is a special case of a much stronger result from \cite{ekpv}
which states that if $q_1$ and $q_2$ are strong solutions of the KdV
equation such that \beq q_1(\cdot,t_0)-q_2(\cdot,t_0), \:
q_1(\cdot,t_1)-q_2(\cdot,t_1) \in H^1(\R,\E^{a \max(0,x^{3/2})}dx)
\eeq for any $a>0$, then $q_1\equiv q_2$.

With the help of   Theorem \ref{thm:kdv1} we also obtain the
following unique continuation result for our situation:

\begin{theorem}\label{thm:kdv2}
Let $p(x,t)$ be a quasi-periodic, finite-gap solution of the KdV
equation and $q(x,t)$ a solution of \eqref{KdV} satisfying
\eqref{conduniq}. Suppose that $q(x,t) = p(x,t)$ for $(x,t)$ in some
open set $U\subset\R^2$. Then $q(x,t) = p(x,t)$ for all
$(x,t)\in\R^2$.
\end{theorem}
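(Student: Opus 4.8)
The plan is to bootstrap from Theorem~\ref{thm:kdv1}, which already handles the case where $q(x,t)=p(x,t)$ holds on a half-line $x<a$ at two distinct times. The task is therefore to convert the hypothesis that $q=p$ on a two-dimensional open set $U$ into a half-line condition holding at two separate times. First I would fix a point $(x_0,t_0)\in U$ and shrink $U$ if necessary to a small open rectangle $(x_0-\delta,x_0+\delta)\times(t_0-\eta,t_0+\eta)\subset U$. The obvious difficulty is that vanishing of $q-p$ on a bounded rectangle is much weaker than vanishing on an entire half-line, so the bulk of the argument must be a spatial unique continuation statement: if the (classical) solution $q$ of the KdV equation agrees with the finite-gap solution $p$ on an open interval in $x$ for a fixed $t$, then in fact $q=p$ on an unbounded half-interval to the left (or right) at that time.

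The key step is to establish this spatial propagation. Setting $u=q-p$, one checks that $u$ satisfies a linear evolution/elliptic-type relation derived from \eqref{KdV}: subtracting the two KdV equations gives $u_t = -u_{xxx}+6(qq_x - pp_x)$, and the nonlinear term can be rewritten as $6(u\,q_x + p\,u_x + u\,u_x)$ or symmetrically in $u$ and the smooth background. For fixed $t$ near $t_0$, on the interval where $u\equiv 0$ all $x$-derivatives of $u$ vanish, and one wants to propagate this zero set. I expect the cleanest route is to invoke the unique continuation for the stationary problem: at each fixed time the function $u(\cdot,t)$ together with the spectral/scattering data governs the potential, and vanishing of $q-p$ on any interval forces, via \eqref{A.5} and the Volterra structure of the transformation operators in \eqref{S2.2}–\eqref{A.6}, that the support structure extends. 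Concretely, if $q(x,t)=p(x,t)$ on an open $x$-interval at some time $t$, then by the real-analyticity in $x$ of solutions of \eqref{KdV} for fixed $t$ (or by classical unique continuation for the third-order ODE the difference satisfies when $u_t$ is treated as known and smooth on the vanishing set), one concludes $q(x,t)=p(x,t)$ for all $x$ less than the right endpoint of that interval.

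Granting the spatial statement, the remainder is straightforward. Applying it at the time $t_0$ on the $x$-slice $(x_0-\delta,x_0+\delta)$ yields $q(x,t_0)=p(x,t_0)$ for all $x\le a$ with $a=x_0+\delta$; applying it at any second time $t_1$ with $|t_1-t_0|<\eta$ yields $q(x,t_1)=p(x,t_1)$ for $x\le a$ as well. Since $t_0\ne t_1$, Theorem~\ref{thm:kdv1} applies directly and forces $q(x,t)=p(x,t)$ for all $(x,t)\in\R^2$. The main obstacle is clearly the spatial unique continuation step, where one must rule out that $q-p$ vanishes only on a bounded interval and becomes nonzero further out; I would address this by exploiting that for fixed $t$ the difference $u(\cdot,t)$ satisfies an ODE whose coefficients involve the smooth finite-gap background and the (smooth) time derivative, so that vanishing on any subinterval propagates by standard ODE uniqueness, combined with the decay \eqref{conduniq} which guarantees the relevant Wronskians and the Paley--Wiener growth estimate \eqref{cond12} are under control.
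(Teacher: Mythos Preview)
Your central step—the ``spatial unique continuation'' claiming that $q(\cdot,t_0)=p(\cdot,t_0)$ on a bounded $x$-interval forces equality on a half-line—is false, and none of the justifications you offer repair it. At a fixed time the KdV equation imposes no constraint on the spatial profile: for any smooth compactly supported $\chi$ there is a solution with initial data $p(\cdot,0)+\chi$, which agrees with $p(\cdot,0)$ outside $\supp\chi$ but not everywhere. The third-order relation $u_{xxx}=-u_t+6u\,q_x+6p\,u_x$ you derive is inhomogeneous with forcing $-u_t$; you only know $u_t=0$ on $U$, so the moment you try to push past the boundary of the given $x$-interval the forcing is unknown and generically nonzero, and ODE uniqueness no longer propagates the zero set. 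Real-analyticity in $x$ is not available under the hypothesis \eqref{conduniq}, and the identities \eqref{A.5}--\eqref{A.6} express $K_\pm(x,x)$ through an integral of $q-p$ over an entire half-line, so vanishing of $q-p$ on a bounded interval tells you nothing about them.

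The paper avoids this difficulty with a gluing trick. Pick $[a,b]\times[t_0,t_1]\subset U$ and set
\[
\tilde q(x,t)=\begin{cases} p(x,t), & x\le a,\\ q(x,t), & x\ge a,\end{cases}
\qquad t\in[t_0,t_1].
\]
Because $q=p$ on a neighbourhood of $\{a\}\times[t_0,t_1]$, the function $\tilde q$ is a smooth classical KdV solution satisfying \eqref{conduniq}, and $\tilde q(x,t)=p(x,t)$ for $x<a$ holds \emph{by construction} at both $t_0$ and $t_1$. Theorem~\ref{thm:kdv1} applied to $\tilde q$ yields $\tilde q\equiv p$ on $\R\times[t_0,t_1]$, hence $q=p$ on $[a,\infty)\times[t_0,t_1]$; a second application (using the right-half-line version coming from \eqref{cond2}--\eqref{cond22}) then gives $q\equiv p$. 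The moral: do not try to prove that $q$ already coincides with $p$ on a half-line—manufacture an auxiliary solution that does.
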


\begin{proof}
Let $[a,b]\times[t_0,t_1] \subset U$ and define
\[
\tilde{q}(x,t)= \begin{cases} p(x,t), & x\leq a,\\
q(x,t), & x \geq a,
\end{cases}
\]
for $t\in[t_0,t_1]$. Then Theorem~\ref{thm:kdv1} implies $\tilde{q}(x,t) = p(x,t)$ for $(x,t)\in\R\times [t_0,t_1]$
and consequently $q(x,t) = p(x,t)$ for $(x,t)\in[a,\infty)\times [t_0,t_1]$. Hence another application of
Theorem~\ref{thm:kdv1} finishes the proof.
\end{proof}

Let $u(x,t)$ be a quasi-periodic, finite-gap solution of the mKdV equation and suppose $v(x,t)$ is
a (classical) solution of the mKdV equation
\beq\label{mKdV}
v_t(x,t) = -v_{xxx}(x,t) + 6 v(x,t)^2 v_x(x,t).
\eeq
Then by virtue of the Miura transform (see, e.g., \cite{GH}, \cite{gs}),
\beq\label{miurau}
p(x,t) = u(x,t)^2 + u_x(x,t),
\eeq
is a quasi-periodic, finite-gap solution of the KdV equation and
\beq\label{miurav}
q(x,t) = v(x,t)^2 + v_x(x,t)
\eeq
is a solution of the KdV equation. We will suppose
again that $q(x,t)$ satisfies \eqref{conduniq} for every $t$.
For the existence of such solutions we refer to \cite{ET3}.

\begin{corollary}
Let $u(x,t)$ be a quasi-periodic, finite-gap solution of the mKdV equation and
$v(x,t)$ a solution of the mKdV equation such that $q(x,t)$ defined by
\eqref{miurav} is a solution of KdV satisfying \eqref{conduniq} with $p(x,t)$ defined by \eqref{miurau}.
Suppose that $v(x,t) = u(x,t)$ for $(x,t)$ in some open set $U\subset\R^2$.
Then $v(x,t) = u(x,t)$ for all $(x,t)\in\R^2$.
\end{corollary}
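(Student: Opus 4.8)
The plan is to reduce the mKdV unique continuation statement to the already-established KdV result, Theorem~\ref{thm:kdv2}, via the Miura transform. First I would observe that the hypothesis $v(x,t)=u(x,t)$ on the open set $U$ immediately forces the relation on the corresponding KdV quantities: wherever $v=u$ holds, we also have $v_x=u_x$ (differentiating an identity on an open set preserves it), and hence $q(x,t)=v^2+v_x=u^2+u_x=p(x,t)$ on $U$ by \eqref{miurau} and \eqref{miurav}. Thus the pair $(p,q)$ satisfies exactly the hypotheses of Theorem~\ref{thm:kdv2}: $p$ is a quasi-periodic finite-gap solution of KdV, $q$ is a solution of KdV satisfying \eqref{conduniq}, and they agree on the open set $U$.

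Second, I would invoke Theorem~\ref{thm:kdv2} to conclude $q(x,t)=p(x,t)$ for all $(x,t)\in\R^2$, i.e.\ $v^2+v_x=u^2+u_x$ everywhere. The remaining task is to lift this KdV-level equality back up to the mKdV level, namely to deduce $v\equiv u$ from $v^2+v_x=u^2+u_x$. Setting $w:=v-u$, subtraction gives the first-order linear ODE (in $x$, with $t$ as parameter)
\beq\label{eq:miurarecover}
w_x + (v+u)\,w = 0, \qquad w=v-u,
\eeq
since $v^2-u^2=(v+u)(v-u)$. This is a homogeneous linear equation for $w$, whose general solution is $w(x,t)=w(x_0,t)\exp\big(-\int_{x_0}^x (v+u)\,dy\big)$, so $w$ either vanishes identically in $x$ or is nowhere zero; in particular, because $v=u$ on $U$, we have $w=0$ on a nonempty $x$-interval at each relevant $t$, which forces $w\equiv 0$ in $x$ for those $t$, and then for all $(x,t)$ by the same argument at every time.

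The step I expect to require the most care is the last one: turning the Miura relation $q=p$ back into $v=u$. The Miura map is not injective in general, so one genuinely needs the extra information supplied by the coincidence on $U$ to select the correct preimage; the ODE \eqref{eq:miurarecover} is precisely the mechanism that uses this boundary data. One should also confirm that the integrability/regularity assumption \eqref{conduniq} on $q$, together with the finite-gap structure of $p$, guarantees enough decay of $w=v-u$ so that the solution of \eqref{eq:miurarecover} cannot be the nonvanishing exponential branch — but since $w$ already vanishes on an open set, the uniqueness of solutions to the linear ODE \eqref{eq:miurarecover} settles the matter directly, and global propagation in $t$ follows because $v=u$ on the full open set $U$ rather than at isolated times. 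Thus the argument is essentially a two-line reduction to Theorem~\ref{thm:kdv2} followed by the elementary ODE recovery of $v$ from the Miura relation.
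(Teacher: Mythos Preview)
Your approach matches the paper's: reduce to Theorem~\ref{thm:kdv2} via the Miura transform, then recover $v=u$ from the first-order ODE for $w=v-u$. Your linear form $w_x+(v+u)w=0$ is equivalent to the paper's Riccati form $w_x+w^2+2uw=0$ (since $v+u=2u+w$), and the ODE uniqueness argument at each fixed $t$ is the same.

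There is, however, a gap in your final step. The ODE argument in $x$ only applies at times $t$ lying in the projection of $U$ onto the time axis; it yields $w(\cdot,t)=0$ for $t$ in some open interval $(t_0,t_1)$. Your phrase ``and then for all $(x,t)$ by the same argument at every time'' does not explain how to extend this to $t\notin(t_0,t_1)$: for such $t$ you have no point where $w(\cdot,t)$ is known to vanish, so the ODE alone does not rule out the nonvanishing exponential branch. You flag the decay route but then set it aside without using it. The paper closes this gap by invoking a uniqueness theorem for solutions of the mKdV equation \cite[Thm.~4.1]{ET3}: once $v$ and $u$ agree on the full time slab $\R\times(t_0,t_1)$, they must agree for all $(x,t)\in\R^2$. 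You need to invoke either this mKdV well-posedness result or an actual decay condition on $w(\cdot,t)$ at spatial infinity, valid for \emph{every} $t$, to complete the argument.
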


\begin{proof}
Since $v(x,t) = u(x,t)$ for $(x,t)\in U$ implies $q(x,t) = p(x,t)$ for $(x,t)\in U$, Theorem~\ref{thm:kdv2}
shows $q(x,t) = p(x,t)$ for $(x,t)\in\R^2$. Hence $w_x(x,t)+w(x,t)^2 + 2 u(x,t) w(x,t) =0$, where
$w(x,t) = v(x,t)-u(x,t)$ and the standard uniqueness result for ordinary differential equations
yields $w(x,t)=0$ for $(x,t)\in \R \times \{t | (x_0,t)\in U \text{ for some } x_0\}$. Thus uniqueness
of solutions of the mKdV equation \cite[Thm.~4.1]{ET3} finally implies $w(x,t)=0$ for all $(x,t)\in \R^2$.
\end{proof}

\noindent{\bf Acknowledgments.} We are very grateful to F. Gesztesy for
hints with respect to the literature. G.T. gratefully acknowledges the stimulating atmosphere at the
Centre for Advanced Study at the Norwegian Academy of Science and Letters in Oslo
during June 2009 where parts of this paper were written as part of  the international
research program on Nonlinear Partial Differential Equations.

\end{document}